\definecolor{myemphcolor}{cmyk}{0.57,0.55,0,0}
\newcommand\subscr[2]{#1_{\textup{#2}}}
\definecolor{black}{rgb}{0,0,0}
\definecolor{Red}{rgb}{1,0,0}
\definecolor{Blue}{rgb}{0,0,1}
\definecolor{Green}{rgb}{0,1,0}
\definecolor{magenta}{rgb}{1,0,.6}
\definecolor{lightblue}{rgb}{0,.5,1}
\definecolor{lightpurple}{rgb}{.6,.4,1}
\definecolor{gold}{rgb}{.6,.5,0}
\definecolor{orange}{rgb}{1,0.4,0}
\definecolor{hotpink}{rgb}{1,0,0.5}
\definecolor{newcolor2}{rgb}{.5,.3,.5}
\definecolor{newcolor}{rgb}{0,.3,1}
\definecolor{newcolor3}{rgb}{1,0,.35}
\definecolor{darkgreen1}{rgb}{0, .35, 0}
\definecolor{darkgreen}{rgb}{0, .6, 0}
\definecolor{darkred}{rgb}{.75,0,0}
\newcommand{\mA}{\mathcal{A}}
\newcommand{\mN}{\mathcal{N}}
\newcommand{\mM}{\mathcal{M}}
\newcommand{\mK}{\mathcal{K}}
\newtheorem{theorem}{\bf Theorem}
\newtheorem{corollary}{\bf Corollary}
\newtheorem{definition}{\bf Definition}
\newtheorem{remark}{\bf Remark}
\newcommand{\longthmtitle}[1]{\mbox{}\textup{\textbf{(#1):}}}
\newcommand{\real}{{\mathbb{R}}}
\newcommand{\realpositive}{{\mathbb{R}}_{>0}}
\newcommand{\realnonnegative}{{\mathbb{R}}_{\ge 0}}
\newcommand{\oprocendsymbol}{\hbox{$\bullet$}}
\newcommand{\oprocend}{\relax\ifmmode\else\unskip\hfill\fi\oprocendsymbol}
\newcommand{\proofendsymbol}{\hbox{$\Diamond$}}
\newcommand{\proofend}{\relax\ifmmode\else\unskip\hfill\fi\proofendsymbol}
\newcommand{\Rmnum}[1]{\expandafter\@slowromancap\romannumeral #1@}
\begin{document}

\title{Context-Aware Wireless Small Cell Networks: How to Exploit User Information for Resource Allocation}
\author{
\IEEEauthorblockN{Ali Khanafer\IEEEauthorrefmark{1}, Walid Saad\IEEEauthorrefmark{2}, and Tamer Ba\c{s}ar\IEEEauthorrefmark{1}}
\IEEEauthorblockA{\IEEEauthorrefmark{1}Coordinated Science Laboratory, University of Illinois at Urbana-Champaign, USA, Email: \{khanafe2,basar1\}@illinois.edu}
\IEEEauthorblockA{\IEEEauthorrefmark{2}Wireless@VT, Bradley Department of Electrical and Computer Engineering, Blacksburg, VA, Email: walids@vt.edu}
\thanks{This research supported in part by an AFOSR MURI Grant FA9550-10-1-0573 and by the U.S. National Science Foundation under Grants CNS-1443159 and CNS-1513697.}
}

\maketitle

\begin{abstract}
In this paper, a novel context-aware approach for resource allocation in two-tier wireless small cell networks~(SCNs) is proposed. In particular, the SCN's users are divided into two types: frequent users, who are regular users of certain small cells, and occasional users, who are one-time or infrequent users of a particular small cell. Given such \emph{context} information, each small cell base station (SCBS) aims to maximize the overall performance provided to its frequent users, while ensuring that occasional users are also well serviced. We formulate the problem as a noncooperative game in which the SCBSs are the players. The strategy of each SCBS is to choose a proper power allocation so as to optimize a utility function that captures the tradeoff between the users' quality-of-service gains  and the costs in terms of resource expenditures. We provide a sufficient condition for the existence and uniqueness of a pure strategy Nash equilibrium for the game, and we show that this condition is independent of the number of users in the network. Simulation results show that the proposed context-aware resource allocation game yields significant performance gains, in terms of the average utility per SCBS, compared to conventional techniques such as proportional fair allocation and sum-rate maximization.
\end{abstract}
\section{Introduction}
The continuously increasing demand for bandwidth-intensive wireless services mandates major structural changes to today's wireless cellular networks~\cite{CISC}. One key change is in the deployment of low-power, low-cost, small cell base stations~(SCBSs), overlaid on existing cellular infrastructure~\cite{IG00,JA03}. Such small cells, operating in the licensed spectrum and deployed both indoors (femtocell at home) and outdoors (operator-deployed picocells and microcells), allow for reducing the distance between users and their serving stations, thus resulting in a dramatic increase in the wireless capacity. It is expected that the next generation of wireless cellular networks will consist of a dense and widespread deployment of small cells, varying in range and capabilities.

Despite its immense technological benefits, the deployment of small cell wireless networks introduces numerous technical challenges at different levels such as resource management~\cite{DL02,RH02,EE05,MOB00}, self-organization~\cite{GTF00,GTF02,GTF04}, and interference management~\cite{SK01,JA05,EE00}. In \cite{DL02}, the authors discuss the use of various coordination techniques for mitigating interference while focusing on 3GPP standard-oriented approaches. The authors in \cite{RH02} propose a joint power and subcarrier resource allocation scheme for small cell networks in which users are prioritized based on their quality-of-service (QoS) needs. The optimal allocation of traffic between different, heterogeneous spectrum bands using a multi-mode femtocell is studied in \cite{EE05}. The work in \cite{MOB00} studies the mobility enhancements needed to maintain desirable performance in small cell networks. A distributed game-theoretic approach for interference management is presented in \cite{GTF00}, while \cite{GTF02} develops a hierarchical self-organizing approach for power allocation in two-tier small cell networks. Other interesting aspects of resource allocation in small cell networks are studied in \cite{GTF04,SK01,JA05,EE00,HC05}.

Although they present interesting approaches for handling resource allocation in small cell networks, most of these existing works optimize the system performance under the assumption that only physical layer metrics, such as channel gain and power capabilities, are necessary for such optimization. In practice, given the proximity of the small cells to their users and the density of small cell deployment, the small cell network can harness a large amount of information on each user; thus, they can provide a better approach to resource allocation. In particular, a \emph{context-aware} resource management mechanism is seen as a suitable and enabling approach for exploiting new information on the users -- such as smartphone type or even social metrics -- to better allocate resources in the small cell network. While the notion of context-aware resource allocation has been studied in conventional cellular systems such as in \cite{CA00,SV00,SV01}, these approaches are not tailored to the specifics of small cell networks. Moreover, unlike such existing works~\cite{CA00,SV00,SV01}, this paper will exploit previously unexplored context information to optimize small cell networks.

The main contribution of this paper is its development of a novel, context-aware resource allocation scheme for small cell networks that exploits its historical knowledge on the users to predict their resource requirements and optimize the overall resource allocation performance. In particular, in dense small cell networks, where each small cell base station (SCBS) covers a relatively small geographical area, the SCBSs can be very effective in predicting the resource usage patterns of their \emph{frequent} users. A frequent user is a user who has been regularly visiting a given small cell over time. An example of such a user is a student who uses the same campus SCBSs every day. Here, the SCBS can monitor the data requests of such users and, then, exploit this information to proactively allocate resources to them. In essence, deploying SCBSs with limited coverage allows for a more focused view of the network, as each SCBS needs only to interact with a limited number of users as opposed to the classical cellular networks where the macro-cell base station handles a massive number of users. 

We formulate this context-aware resource allocation problem as a noncooperative game, and we provide a sufficient condition for the existence and uniqueness of the equilibrium. Further, we provide a method for computing this equilibrium. In this game, the SCBCs optimally allocate their powers so as  to optimize the tradeoff between maximizing the data rate to all users (frequent and non-frequent) and minimizing the misspending of resources  based on the knowledge of the demands of frequent users. Subsequently, our results show that, by endowing the SCBSs with such an ability  to predict and learn the usage patterns, we can significantly boost the performance of the network and lead to improved rates, compared to classical techniques such as sum-rate maximization or proportionally fair allocation.

The rest of this paper is organized as follows: Section~\ref{SARA} presents the proposed context-aware system model. In Section~\ref{sec:game}, we formulate the problem as a noncooperative game and study the game's properties. Simulation results are presented and analyzed in Section~\ref{sec:sim}. Finally, conclusions are drawn in Section~\ref{sec:conc}. An Appendix contains the proof of the main result of the paper.

\section{Context-Aware System Model} \label{SARA}
From the perspective of each SCBS, we can categorize user equipments (UEs) into two main groups: \emph{frequent} and \emph{occasional} users. Here, frequent users are those users who regularly access a certain SCBS. For these frequent UEs, the SCBSs are able to predict the resources required to serve them. As a result, we consider that each SCBS will strive to meet the demand requirements of its frequent users (which it learned over time). In contrast, occasional users are those who make sporadic uses of the SCBSs. For example, they can be seen as mobile users who are passing by the coverage area of an SCBS at one time, but do not often return there. Traffic patterns of mobile UEs can vary considerably based on the type of services requested. Indeed, UEs differ in the way they utilize the spectrum; while some use their smartphones for voice mainly, others use them for social media and video streaming. For frequent users, the SCBSs can generally observe the regular traffic and usage patterns. Consequently, we define the notion of \emph{context-aware} resource allocation as \emph{the ability of SCBSs to learn the traffic patterns of their frequent users and, hence, be able to service them better}.

In the network considered, each SCBS attempts to maximize the sum-rate it provides to all UEs served (frequent and occasional). Naturally, the provided rate depends on the amount of power allocated which, in turn, depends on each user's QoS requirements. For frequent users, as the SCBS knows their traffic patterns ahead of time, we assume that each SCBS knows the amount of resources required by its frequent UEs. For the occasional users, such an assumption does not hold since occasional UEs have limited interaction with the SCBSs and can be one-time visitors.

For each UE, an SCBS must provide the exact amount of required resources, to meet the QoS needs. On the one hand, if an SCBS provides a UE with an insufficient amount of resources, then the UE will not experience the required QoS. On the other hand, even though providing a UE with more resources than it requests can boost the overall sum-rate, the scarcity of the spectrum imposes that each SCBS utilizes its resources wisely and sparingly. For example, instead of spending extra power on an occasional UE that did not request it, the SCBS can exploit this additional resource to improve the performance of its regular, frequent users. To discourage such \emph{misspending} of resources, next, we will propose a resource allocation mechanism in which the SCBSs optimizes the tradeoff between maximizing the overall sum-rate and minimizing the cost incurred by misspending resources.

The broadcast nature of wireless networks leads to interference between the SCBSs in the downlink. Therefore, the choice of a resource allocation scheme at one SCBS will determine the interference level experienced on other, neighboring SCBSs. Moreover, in small cells, it is of interest to develop \emph{self-organizing} resource allocation mechanisms in which each SCBS optimizes its own, individual objective, without relying on a centralized control and with little coordination with its neighboring SCBSs. Due to this \emph{interdependence} of the resource allocation decisions due to factors such as \emph{mutual interference} and the need for self-organization, we will formulate the resource allocation problem as a strategic, noncooperative game~\cite{NSF1}.

\section{Context-Aware Resource Allocation Game: Formulation and Solution}\label{sec:game}
Consider the downlink of a two-tier small cell network composed of $M$ SCBSs and $N$ UEs.  We let $\mathcal{M}$ be the set of all SCBSs and $\mathcal{N}$ be the set of all UEs.  Let $\mathcal{N}_1^i$, $\mathcal{N}_2^i$ be the sets of frequent and occasional users served by SCBS $i$, where $\mathcal{N}_1^i \cap \mathcal{N}_2^i = \emptyset$, and $\mathcal{N}_1^i, \mathcal{N}_2^i \subseteq \mathcal{N}$, for all $i$. Also, define $\mN_i = \mN_1^i \cup \mN_2^i$, $N_i := |\mathcal{N}_i|$, $N_1^i := |\mathcal{N}_1^i|$, and $N_2^i := |\mathcal{N}_2^i|$. Note that a given UE could be categorized as frequent by one SCBS and occasional by another. Also, note that SCBS $i$ can serve a strict subset of $\mN$ and does not necessarily serve all users in the network. Formally, we allow for scenarios where $\mN_i \subset \mN$.

We assume that the SCBSs share the same spectrum and have access to the same $K$ subcarriers (or channels). We denote the set of the subcarriers by $\mK$. We assume that each SCBS performs subcarrier allocation using an efficient allocation technique from the literature \cite{IG00,RH02,lopez2009ofdma}. Two SCBSs will interfere if and only if they allocate power to a given user over the same subcarrier. Further, we assume that there is no inter-user interference on a given subcarrier. In other words, we assume that each subcarrier can be used by only one UE. Since the number of users can be very large compared to $K$, we assume that $\mN$ is the set of users selected (out a larger population) by the subcarrier allocation at the SCBSs. To this end, we assume that $K \geq N$. Our setting can be extended to allow for inter-user interference (hence, allowing for $K < N$), but we restrict our attention to the no inter-user interference case in this paper in order to demonstrate the idea of context-aware resource allocation.

Let $\mK_i \subseteq \mK$ denote the set of subcarriers available to SCBS $i$, and let $K_i := |\mK_i|$. Note that with these definitions SCBS $i$ is allowed to use a strict subset of the subcarriers only, i.e., we allow for scenarios where $\mK_i \subset \mK$. We introduce the mapping $\pi_i:\mK_i \to \mN_i$, where $\pi_i(k)\in \mN_i$ is the user to be served over subcarrier $k$, according to the subcarrier allocation performed at SCBS $i$. By our assumptions, the mapping $\pi_i$ is one-to-one and onto. 

Let $u_{ij}[k] \in \realnonnegative$ be the power allocated by SCBS $i \in \mathcal{M}$ to UE $j \in \mathcal{N}$ over subcarrier $k$. We denote the power allocation matrix of SCBS $i$ by $U_i \in \mathbb{R}^{N \times K}$, where $[U_i]_{jk} = u_{ij}[k] \in \realnonnegative$. Each SCBS has a finite power budget $\subscr{P}{max}$, and we assume that
\[
\sum_{k=1}^K u_{i\pi_i(k)}[k]  = \sum_{k=1}^K \sum_{j \in \mathcal{N}_i} u_{ij}[k] \leq \subscr{P}{max},
\]
for all $i\in \mM$. It is possible to extend this setting to the case where each SCBS has its own power constraint $\subscr{P}{max}^i$; here, we assume that $\subscr{P}{max}^i = \subscr{P}{max}$, for all $i$, for simplicity. Note that under the no inter-user interference assumption, each column of $U_i$ can contain only one nonzero entry, for all $i$. Also, SCBSs $i$ and $j$ will interfere if and only if $u_{il}[k]>0$ and $u_{jl}[k]>0$ for some $l \in \mN$ and some $k \in \mK$. We assume that $u_{ij}[k] > 0$ if and only if $k \in \mK_i$. We can now formally define the action set of SCBS $i$ as
\begin{eqnarray}\label{eqn::actionSet}
\mA_i \hspace{-2mm} & = & \hspace{-2mm}\left\{U_i \in \realnonnegative^{N_i \times K_i} : [U_i]_{jk} = u_{ij}[k]; \mathbf{1}^TU_i \mathbf{1} \leq \subscr{P}{max};  \right. \nonumber \\
&&\hspace{-2mm} \left. u_{ij}[k] > 0 \iff j= \pi_i(k) \text{ for } k\in \mK_i, j \in \mN_i    \right\}.
\end{eqnarray}
Note that the action sets depend on the subcarrier allocation implicitly.

Given the above, we can now write the downlink signal-to-interference-plus-noise ratio (SINR) achieved at user $j$ due to the power allocated to it by SCBS $i$ over subcarrier $k$ as follows:
\begin{equation*}
\text{SINR}_{ij}[k] = \frac{|h_{ij}[k]|^2d_{ij}^{-\alpha}u_{ij}[k]}{  \sum\limits_{ \substack{l \in \mM \\ l \neq i} } |h_{lj}[k]|^2d_{lj}^{-\alpha}u_{lj}[k] + \sigma_j^2[k]},
\end{equation*}
where $d_{ij} \in \realpositive$ is the distance from SCBS $i$ to user $j$, $h_{ij}[k] \in \real$ is the channel gain from SCBS $i$ to user $j$ over subcarrier $k$, and $\sigma_j^2[k] \in \realnonnegative$ is the variance of the additive zero-mean circular complex Gaussian noise at UE $j$ over subcarrier $k$. Consequently, we can write the rate allocated by SCBS $i$ to UE $j$ as
\begin{equation*}
R_{ij} = \sum_{k=1}^K \log\left(1 + \text{SINR}_{ij}[k] \right).
\end{equation*}

Each UE requires a certain level of QoS from the SCBSs. We denote the QoS \emph{required} by UE $j$ from SCBS $i$ by $\overline{\tau}_{ij}$. Depending on the type of the user $j$, from the standpoint of SCBS $i$, the value $\overline{\tau}_{ij}$ may or may not be known by that SCBS. We denote the actual QoS \emph{provided} by SCBS $i$ to UE $j$ by $\tau_{ij}$. We assume that $\tau_{ij} = f_i(u_{ij})$, where $f_i:\realnonnegative \to \realnonnegative$ is a one-to-one, continuously differentiable function that maps the power allocation $u_{ij}$ to a certain value $\tau_{ij}$ of the QoS requirement. We let $\overline{\tau}_{ij} = f_i(\overline{u}_{ij})$, $\overline{u}_{ij} \in \realnonnegative$. 

We assume that each SCBS knows the exact QoS requirements, $\overline{\tau}_{ij}$'s, that its frequent users need. This information is learned over time due to the regular and periodic behavior of these known, frequent users. Whether $\tau_{ij}$ is equal to $\overline{\tau}_{ij}$ (or equivalently, $u_{ij}$ is equal to $\overline{u}_{ij}$) or not for frequent users will be be decided via an optimization problem to be presented next. For the occasional users, we assume that the QoS requirements are not known by the SCBSs due to the lack of interaction between those users and the SCBSs. 

Let $c_{ij}(\tau_{ij},\overline{\tau}_{ij})$ be the cost associated with allocating $\tau_{ij}$ to the $j$-th UE whose QoS requirement is given by $\overline{\tau}_{ij}$. This cost function can take many forms and must be designed so as to penalize prospective misspending of resources, as previously discussed. Hereinafter, without loss of generality, we choose the following continuous cost function:
\begin{eqnarray*} \label{contDist}
c_{ij}(\tau_{ij},\overline{\tau}_{ij}) = \left| \tau_{ij} - \overline{\tau}_{ij} \right|,
\end{eqnarray*}
which is zero only when the SCBS matches the exact QoS demand of the user. If $j \in \mN_2^i$, the cost function is not well defined because the SCBS does not have information about the QoS requirements of occasional users. As a convention, we set $c_{ij} = 0$ for all $j \in \mN_2^i$, for all $i \in \mM$.

We associate a utility function with each SCBS. In particular, the utility function of the $i$-th SCBS is given by:
\begin{equation} \label{cxtUtility}
J_i(U_i,U_{-i} ) = \sum_{j \in \mN_i} R_{ij} - \eta_i\sum_{j \in \mathcal{N}_1^i} c_{ij}(\tau_{ij},\overline{\tau}_{ij}),
\end{equation}
where $U_{-i}$ is the collection of the power allocation matrices of the opponents of SCBS $i$, and $\eta_i > 0$ is a constant controlled by the SCBS depending on what it favors more: maximizing rate or minimizing cost. This constant will be referred to as the ``tradeoff constant" hereinafter.

We can now formulate the problem of resource allocation as a strategic game $\Xi:=\{\mathcal{M}, \{\mathcal{A}_i\}_{i\in\mathcal{M}}, \{U_i\}_{i\in\mathcal{M}}\}$, which is defined by its three main components: (i)- the \emph{players} being the SCBSs in $\mathcal{M}$, (ii)- the \emph{action} sets $\mathcal{A}_i$ of the players defined in (\ref{eqn::actionSet}), and (iii)- the \emph{utility functions} $J_i$ of the players defined in (\ref{cxtUtility}), which capture the gains and costs from each resource allocation decision. In this game, the optimization problem to be solved by the $i$-th SCBS is
\begin{equation*} \label{cxtGame}
\max\limits_{U_i \in \mA_i} J_i(U_i,U_{-i} ), \quad \text{for each fixed } U_{-i}.
\end{equation*}

We are interested in the pure-strategy Nash equilibrium (PSNE) solution of this game.
\begin{definition}\longthmtitle{Pure-strategy Nash equilibrium~\cite{BasarOlsder}}
The $n$-tuple $\{U_1^\star, \hdots, U_n^\star \}$, with $U_i^\star \in \mA_i$, constitutes a PSNE if, for all $ i \in \mM $, the inequality
\[
J_i(U_i^\star,U^\star_{-i}) \geq J_i(U_i,U^\star_{-i})
\]
is satisfied for all $U_i \in \mA_i$.
\end{definition}
According to this definition, no SCBS has an incentive to unilaterally deviate from the person-by-person optimal solution $\{U_1^\star, \hdots, U_n^\star \}$.

Before we state the main result of this paper, we introduce the following definitions:
\begin{eqnarray}
&& \hspace{-7mm} \subscr{\sigma^2}{min} := \min_{j\in \mN, k\in \mK} \sigma_j^2[k],  \hspace{3mm}\subscr{\beta}{min} := \min_{\substack{i\in \mM, j\in \mN\\ k\in \mK}} |h_{ij}[k]|^2d_{ij}^{-\alpha},\label{eqn::defs1}\\
&& \hspace{-7mm} \subscr{\sigma^2}{max} := \max_{j\in \mN, k\in \mK} \sigma_j^2[k] , \hspace{3mm} \subscr{\beta}{max} :=  \max_{\substack{i\in \mM, j\in \mN\\ k\in \mK}} |h_{ij}[k]|^2d_{ij}^{-\alpha}. \label{eqn::defs2}
\end{eqnarray}
Further, let $\subscr{K}{max}$ denote the largest number of subcarriers allocated to a UE by an SCBS, and define
\[
\xi_1 :=  \frac{\subscr{\beta}{min} (\subscr{\sigma^2}{min})^3}{(M-1)\subscr{K}{max}\subscr{\beta}{max}^3}, \quad \xi_2 := \frac{1-\subscr{\sigma^2}{max} }{M\subscr{\beta}{max} }.
\]
The following theorem establishes the existence and uniqueness of PSNE for $\Xi$.
\begin{theorem} \label{thm::uniquePSNE} \longthmtitle{Existence and uniqueness of PSNE for the context-aware power allocation game}
Assume that the cost function $c_{ij}$ is linear in $u_{ij}$, for all $i\in \mM$ and $j \in \mN_i$, and that $\subscr{\sigma^2}{max} < 1$. Then, the proposed context-aware power allocation game $\Xi$ admits a unique PSNE if
\begin{equation}\label{eqn::Pmax}
\subscr{P}{max} < \min\left\{ \xi_1, \xi_2    \right\},
\end{equation}
\end{theorem}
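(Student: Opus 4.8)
The plan is to follow the classical two-stage programme for concave games: first establish existence by a compactness/fixed-point argument, then uniqueness through Rosen's theory of diagonally strictly concave games, with the explicit bound (\ref{eqn::Pmax}) emerging from uniform estimates on the interference-induced cross-terms.

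For existence, I would first argue that each action set $\mA_i$ (taken in its closed form, relaxing the strict ``$>0$'' in (\ref{eqn::actionSet}) to $u_{ij}[k]\ge 0$) is a nonempty, compact, convex subset of a Euclidean space, being the intersection of the nonnegative orthant with the budget half-space $\mathbf{1}^T U_i\mathbf{1}\le\subscr{P}{max}$ and the support constraints. Next I would verify that $J_i$ in (\ref{cxtUtility}) is continuous in $(U_i,U_{-i})$ and concave in $U_i$ for fixed $U_{-i}$: writing $b_{lj}[k]:=|h_{lj}[k]|^2 d_{lj}^{-\alpha}$, each summand $\log(1+\mathrm{SINR}_{ij}[k])$ equals $\log\bigl(D_{ij}[k]+b_{ij}[k]u_{ij}[k]\bigr)-\log D_{ij}[k]$, where the interference-plus-noise term $D_{ij}[k]$ does not depend on $U_i$; hence it is the composition of the concave $\log$ with an affine map of $u_{ij}[k]$, and the sum over the disjoint own variables $\{u_{i\pi_i(k)}[k]\}_{k\in\mK_i}$ is jointly concave. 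Under the hypothesis that $c_{ij}$ is linear in $u_{ij}$, the cost term is affine, so $J_i$ stays concave, and the Debreu--Glicksberg--Fan theorem yields a PSNE.

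For uniqueness I would form the pseudo-gradient $g(U)=\bigl(\nabla_{U_i}J_i\bigr)_{i\in\mM}$ (starting with weights $r_i=1$) and its Jacobian $G(U)$, and aim to show $G(U)+G(U)^T\prec 0$ on the feasible set. The decisive structural fact is the sparsity of $G$: interference couples two SCBSs only when they serve the \emph{same} user on the \emph{same} subcarrier, so after grouping the variables by the pair $(j,k)$ the matrix $G$ is block-diagonal, each block ranging over the set $\mathcal{S}_{jk}=\{i:\pi_i(k)=j\}$ of at most $M$ stations and, per user, over at most $\subscr{K}{max}$ shared subcarriers. Direct differentiation gives diagonal entries $\partial^2 J_i/\partial u_{ij}[k]^2=-b_{ij}[k]^2/T_{jk}^2<0$ and off-diagonal entries $\partial^2 J_i/\partial u_{ij}[k]\,\partial u_{lj}[k]=-b_{ij}[k]b_{lj}[k]/T_{jk}^2$, with $T_{jk}$ the total received-plus-noise power on $(j,k)$. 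Because distinct users never share a block, the definiteness test is carried out group-by-group and never sees the total count $N$; this is exactly why the resulting condition is independent of the number of users. I would bound $\subscr{\beta}{min}\le b_{lj}[k]\le\subscr{\beta}{max}$ and $\subscr{\sigma^2}{min}\le\sigma_j^2[k]\le\subscr{\sigma^2}{max}$, and use $\subscr{\sigma^2}{max}<1$ together with $\subscr{P}{max}<\xi_2=(1-\subscr{\sigma^2}{max})/(M\subscr{\beta}{max})$ to force $T_{jk}<1$, pinning every denominator into a controlled range, while $\subscr{P}{max}<\xi_1$ would supply the slack needed to dominate the interference cross-terms (at most $M-1$ per subcarrier, over at most $\subscr{K}{max}$ subcarriers per user) by the diagonal curvature.

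The main obstacle is precisely converting negative \emph{semi}definiteness into \emph{strict} negative definiteness. Each bare rate block equals $-bb^T/T_{jk}^2$ with $b=(b_{ij}[k])_{i\in\mathcal{S}_{jk}}$, which is only rank one, so the strict inequality cannot come from the rate Hessian alone: along within-group directions that hold the aggregate $T_{jk}$ fixed the game is flat to first order. The argument must therefore exploit the per-station budget constraints $\mathbf{1}^T U_i\mathbf{1}\le\subscr{P}{max}$, which couple these otherwise-degenerate directions across subcarriers, together with the smallness of $\subscr{P}{max}$, to rule out deviations along them; the cubic powers $(\subscr{\sigma^2}{min})^3$ and $\subscr{\beta}{max}^3$ appearing in $\xi_1$ strongly suggest that the decisive estimate is on a composite object with $T_{jk}^3$-type denominators (as arises, for instance, when one controls a best-response or projected-gradient Jacobian, or a strict-monotonicity residual, rather than the raw Hessian), and that $\subscr{P}{max}$ small is what drives this coupling below the diagonal term. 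Closing this gap rigorously --- rather than the routine differentiation and term-by-term bounding, which I expect to be mechanical --- is where the real content of the proof lies.
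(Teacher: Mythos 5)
Your overall framework coincides with the paper's: the paper also gets existence from Rosen's concave-game theorem (your Debreu--Glicksberg--Fan step is equivalent) and then tries to verify Rosen's sufficient condition $G(u)+G(u)^T\prec 0$ via diagonal dominance. The divergence is at the decisive step, and there your computation is the correct one while the paper's is not. Writing $T_l=\sum_{m}\beta_{ml}u_{ml}+\sigma_l^2$ for the total received-plus-noise power at user $l$ on its subcarrier, the true derivatives are
\begin{equation*}
\frac{\partial R_{il}}{\partial u_{il}}=\frac{\beta_{il}}{T_l},\qquad
\frac{\partial^2 R_{il}}{\partial u_{il}^2}=-\frac{\beta_{il}^2}{T_l^2},\qquad
\frac{\partial^2 R_{il}}{\partial u_{il}\,\partial u_{jl}}=-\frac{\beta_{il}\beta_{jl}}{T_l^2},
\end{equation*}
which is exactly your rank-one block $-bb^T/T_l^2$; consequently each block of $G+G^T$ equals $-2bb^T/T_l^2$, a singular matrix whenever two or more SCBSs interfere at a user, so Rosen's criterion can \emph{never} hold, no matter how small $\subscr{P}{max}$ is. The paper evades this only through a calculus error: it asserts $\frac{\partial R_{il}}{\partial u_{il}}=\beta_{il}\frac{I_{il}+\sigma_l^2}{I_{il}+\beta_{il}u_{il}+\sigma_l^2}$, which carries a spurious factor of $(I_{il}+\sigma_l^2)$, and this error propagates to a cross-partial $\beta_{jl}\beta_{il}^2u_{il}/T_l^2$ that is proportional to $u_{il}$, hence $O(\subscr{P}{max})$; only with that erroneous expression does the dominance condition (\ref{eqn::diagdom}) become satisfiable by shrinking $\subscr{P}{max}$, which is how the bound (\ref{eqn::Pmax}) arises. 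So the gap you flagged is not a routine step you failed to mechanize; it is a genuine obstruction that the paper's own proof bypasses incorrectly.

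Moreover, none of the repair strategies you sketch (budget coupling, best-response or projected-gradient Jacobians) can close the gap, because the kernel directions of your rank-one blocks really do carry equilibria and the theorem as stated fails. Take $M=2$, $K=N=1$, one user that is frequent for both SCBSs with $\overline{u}_{i1}=0$ and $f_i(u)=u$, so that $c_{i1}(u_{i1})=u_{i1}$ is linear; let $\beta_{11}=\beta_{21}=1$, $\sigma^2=1/2$, $\subscr{P}{max}=1/10<\min\{\xi_1,\xi_2\}=\min\{1/8,1/4\}$, and $\eta_i=1/(\sigma^2+\subscr{P}{max})=5/3$. Then $\partial J_i/\partial u_{i1}=1/(u_{11}+u_{21}+1/2)-5/3$ vanishes precisely on the segment $u_{11}+u_{21}=\subscr{P}{max}$, and since each $J_i$ is concave in its own variable, every point of that segment is a PSNE: a continuum of equilibria although all hypotheses of the theorem hold. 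Your write-up therefore isolates exactly where any proof must fail; what the correct Hessian computation actually demands is a strengthened hypothesis of the form $\beta_{il}>\sum_{j\neq i}\beta_{jl}$ (direct gains dominating aggregate interfering gains), not a small-power condition.
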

\begin{proof}
See the Appendix.
\end{proof}

\begin{remark} \longthmtitle{Independence from the number of UEs}
An attractive feature of the condition on the maximum transmit power provided by Theorem \ref{thm::uniquePSNE} is that it does not depend on the number of the UEs in the network. Hence, this makes the existence and uniqueness of PSNE guaranteed for a large class of networks in which the number of UEs is large. \oprocend
\end{remark}

\begin{remark}\label{rem::nonlinearQoS}\longthmtitle{Nonlinear QoS requirements}
The linearity assumption on $c_{ij}$ guarantees that $J_i$ is concave in $U_i$, which in turn guarantees the existence of a PSNE as stated in the proof of Theorem \ref{thm::uniquePSNE}. It is possible to extend the theorem above to the case where $c_{ij}$ is not linear in $u_{ij}$. This will entail placing conditions on $\eta_i$ in order to preserve the concavity of $J_i$ in $U_i$. In particular, we would need $\frac{\partial^2}{\partial U_i^2} J_i \preceq 0$, which is equivalent to
\[
\sum_{j \in \mN_i}\frac{\partial^2}{\partial U_i^2}  R_{ij} - \eta_i\sum_{j \in \mathcal{N}_1^i} \frac{\partial^2}{\partial U_i^2} c_{ij}(\tau_{ij},\overline{\tau}_{ij}) \preceq 0,
\] 
for all $i\in \mM$. \oprocend
\end{remark}

The following corollary proposes a method to compute the unique PSNE of $\Xi$. As proposed by Rosen in \cite{Rosen1965}, we allow the players to update their strategies dynamically at a rate that is proportional to the gradient of their utility functions, while satisfying the power constraints. To this end, let $\Phi_i: \mathbb{R}^{N\times K} \to \mA_i$ be the projection operator that guarantees that the power constraint is satisfied for each player.
\begin{corollary}\label{cor::compute}\longthmtitle{Computation of the unique PSNE}
Assume that  the cost function $c_{ij}$ is linear in $u_{ij}$, for all $i\in \mM$ and $j\in \mN_i$, and let $\subscr{P}{max}$ be chosen according to (\ref{eqn::Pmax}). Then, the dynamical system
\begin{equation*}
\frac{d}{dt}U_i = \Phi_{i} \left(  \frac{\partial}{\partial U_i} J_i     \right), \quad i \in \mM,
\end{equation*}
is globally asymptotically stable and $\{U_1,\hdots,U_M \}$ converges to the unique PSNE of the context-aware power allocation game, $\{U_1^\star,\hdots,U_M^\star \}$, starting from any arbitrary initial condition $\{U_1(0),\hdots,U_M(0) \}$.
\end{corollary}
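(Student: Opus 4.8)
The plan is to read the given flow as the continuous-time gradient-projection dynamics studied by Rosen \cite{Rosen1965} and to exhibit a Lyapunov function certifying global asymptotic stability of the unique PSNE. First I would stack the players' strategies into a single profile $x := (U_1,\ldots,U_M)$ living in the joint action set $\mA := \prod_{i\in\mM}\mA_i$. Each $\mA_i$ is cut out by the linear budget constraint $\mathbf{1}^T U_i \mathbf{1}\le \subscr{P}{max}$ together with nonnegativity, so $\mA$ is convex and compact, and by the argument already used in Theorem \ref{thm::uniquePSNE} each $J_i$ is continuously differentiable and concave in $U_i$. I would then introduce the pseudo-gradient $g(x) := \bigl(\tfrac{\partial}{\partial U_1}J_1,\ldots,\tfrac{\partial}{\partial U_M}J_M\bigr)$, so that the dynamics read $\dot x = \Phi(g(x))$, where $\Phi=(\Phi_1,\ldots,\Phi_M)$ is the blockwise projection onto the tangent cone of $\mA$ that keeps the flow feasible.

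The first substantive step is to characterize the rest points. Because each $J_i$ is concave in $U_i$ over the convex set $\mA_i$, a profile $x^\star\in\mA$ is a PSNE if and only if it solves the variational inequality $\langle g(x^\star),\, x - x^\star\rangle \le 0$ for every $x\in\mA$, which is in turn equivalent to the stationarity condition $\Phi(g(x^\star))=0$. By Theorem \ref{thm::uniquePSNE}, under \eqref{eqn::Pmax} such an $x^\star$ exists and is unique, so the flow has exactly one equilibrium.

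The heart of the proof is to reuse the strict-monotonicity property that underlies uniqueness in Theorem \ref{thm::uniquePSNE}: condition \eqref{eqn::Pmax} renders the game diagonally strictly concave in Rosen's sense, i.e. the symmetrized Jacobian of $g$ is negative definite on $\mA$, equivalently $\langle g(x)-g(y),\,x-y\rangle < 0$ for all $x\ne y$ in $\mA$. With this in hand I would take the Lyapunov candidate $V(x) := \tfrac{1}{2}\lVert x - x^\star\rVert^2$, which is positive definite and vanishes only at $x^\star$. Differentiating along trajectories gives $\dot V = \langle x-x^\star,\,\Phi(g(x))\rangle$. Using the Moreau decomposition of $g(x)$ into its tangent- and normal-cone parts and the defining property of the normal cone at $x$, one obtains the projected-dynamical-systems inequality $\langle x-x^\star,\,\Phi(g(x))\rangle \le \langle x-x^\star,\, g(x)\rangle$ since $x^\star\in\mA$. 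Writing $g(x)=\bigl(g(x)-g(x^\star)\bigr)+g(x^\star)$, the variational inequality at $x^\star$ kills the $g(x^\star)$ term while diagonal strict concavity makes the remainder strictly negative, so $\dot V < 0$ for all $x\ne x^\star$. Global asymptotic stability on the compact set $\mA$, hence convergence of $\{U_1,\ldots,U_M\}$ to the unique PSNE from any initial condition, then follows from Lyapunov's theorem.

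The part I expect to be most delicate is the treatment of the projection $\Phi$. Its presence makes the right-hand side discontinuous on the boundary of $\mA$, so before the Lyapunov computation I would invoke the theory of projected dynamical systems (in the spirit of Nagurney and Zhang) to guarantee existence and uniqueness of Carath\'eodory solutions, to justify the rest-point characterization $\Phi(g(x^\star))=0$, and to certify the projection inequality used in bounding $\dot V$. Everything else---convexity and compactness of $\mA$, concavity of each $J_i$, and the diagonal strict concavity implied by \eqref{eqn::Pmax}---is inherited directly from the model and from the proof of Theorem \ref{thm::uniquePSNE}.
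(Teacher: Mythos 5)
Your proof is correct, but it is far more self-contained than what the paper does. The paper's entire proof is two lines: it notes that, under the linearity assumption and (\ref{eqn::Pmax}), the proof of Theorem \ref{thm::uniquePSNE} already established $G(u)+G(u)^T \prec 0$ for all $u \in \hat{\mA}$, and then it invokes Theorem 9 of \cite{Rosen1965}, which states precisely that this (diagonal strict concavity) makes Rosen's gradient-projection dynamics globally asymptotically stable at the unique equilibrium. What you have done, in effect, is inline a proof of that cited theorem: your chain --- negative definite symmetrized Jacobian implies strict monotonicity of the pseudo-gradient (Rosen's Theorem 6 argument), rest points coincide with solutions of the variational inequality and hence with the PSNE, and the squared-distance Lyapunov function decreases because the normal-cone part of the Moreau decomposition has the right sign --- is essentially Rosen's own argument recast in the language of projected dynamical systems. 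Your version buys two things: self-containedness, and a rigorous treatment of a point the paper glosses over, namely what $\Phi_i$ actually is; the paper describes it only as ``the projection operator that guarantees that the power constraint is satisfied,'' whereas Rosen's Theorem 9 concerns a specific dynamics built from active-constraint Lagrange multipliers, so your reading of $\Phi_i$ as the tangent-cone projection (with the Carath\'eodory solution theory you flag) is a legitimate way to make the corollary's statement precise. One small mismatch to fix: the monotonicity established in the paper's appendix lives on the reduced profiles $u \in \hat{\mA}$ (the nonzero entries of each $U_i$, whose supports are fixed by the subcarrier assignment), not on the full matrices, so your stacked variable $x=(U_1,\ldots,U_M)$ should be identified with $u$ via the bijection constructed in the appendix; this is harmless because the action sets freeze the off-support entries at zero, but it should be said.
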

\begin{proof}
Under the assumptions made in the statement of the corollary, we have $G(u)+G(u)^T \prec 0$, for all $u \in \hat{\mA}$, where $G(u)$ and $\hat{\mA}$ are defined in the Appendix. The corollary then follows by Theorem 9 in \cite{Rosen1965}.
\end{proof}



\section{Simulation Results}\label{sec:sim}
For simulations, consider a two-tier $500$~m $\times$ $500$~m square network in which the UEs and SCBSs are deployed uniformly. Here, the SCBSs represent operator-owned, outdoor picocells. The noise power is set to $-110$ dBm. We let $\alpha = 3$. The maximum transmit power of each picocell SCBS is chosen so as (\ref{eqn::Pmax}) is satisfied. We compute the PSNE of $\Xi$ using the algorithm proposed in Corollary \ref{cor::compute}. We are interested in simulating the worst-case scenario in which each SCBS serves all the users in the network, and that all the SCBSs serve a given UE over the same subcarrier. Formally, we assume that $K=N$, $\mN_i = \mN$, $\mK_i=\mK$, and $\pi_i(k) = \pi_j(k)$ for all $i\in \mM$ and $k \in \mK$.

We will compare the performance of our context-aware allocation to two widely used allocation schemes: sum-rate maximization and proportionally fair (PF) allocation. The sum-rate maximization problem is $\max_{U_i \in \mA_i} \sum_{j \in \mathcal{N}_i} R_{ij}$. As in our allocation scheme, we let the agents play a noncooperative game.  Clearly, the sum-rate maximization problem need not yield a fair allocation. For this reason, we also compare our scheme to the popular proportionally fair allocation scheme which overcomes this problem at the expense of possibly worse sum-rate compared to that provided by the sum-rate maximization scheme. The PF allocation is the solution to the following problem: $\max_{U_i \in \mA_i} \sum_{j \in \mathcal{N}_i} \log(R_{ij})$. As in our allocation scheme, we let the SCBSs engage in a noncooperative game in all schemes. We compare the performance of the three schemes by evaluating the utility functions of each SCBS as given by (\ref{cxtUtility}). We assume that the QoS requirements of the frequent UEs are certain desired power allocation levels. To this end, we will set the QoS mappings for frequent users to $\tau_{ij} = f_i(u_{ij})=u_{ij}$, and therefore we have $c_{ij} = |u_{ij} - \overline{u}_{ij}|$, for all $i\in \mM$ and $j \in \mN_1^i$. The true QoS requirements of frequent users $\overline{\tau}_{ij} = \overline{u}_{ij}$ are generated at random. All statistical results are averaged over all SCBS and UE locations via a large number of independent runs.

In Fig. \ref{fig::avgUtilNumFreq}, we compare the average utility  per SBCS resulting from all the three schemes for a network with $M=5$ SCBSs, $5$ occasional users, and a varying number of frequent users. For all SCBSs, the tradeoff constants are set to $\eta_i = 2,\ \forall i \in \mathcal{M}$. Fig.~\ref{fig::avgUtilNumFreq} shows that, as the number of frequent users increases, the performances of all three schemes improve due to the additional users. Particularly, for the proposed context-aware approach, this increase is due to the fact that this approach is capable of exploiting the knowledge of the traffic patterns of a larger number of users (the frequent ones) in the network  hence yielding an improved utility. In Fig.~\ref{fig::avgUtilNumFreq}, we can see that the proposed context-aware game has a significant performance improvement, in terms of the average utility per SCBSs, compared to both PF allocation and sum-rate maximization for any number of frequent users.

\begin{figure}[!t]
  \begin{center}
    \includegraphics[width=8cm]{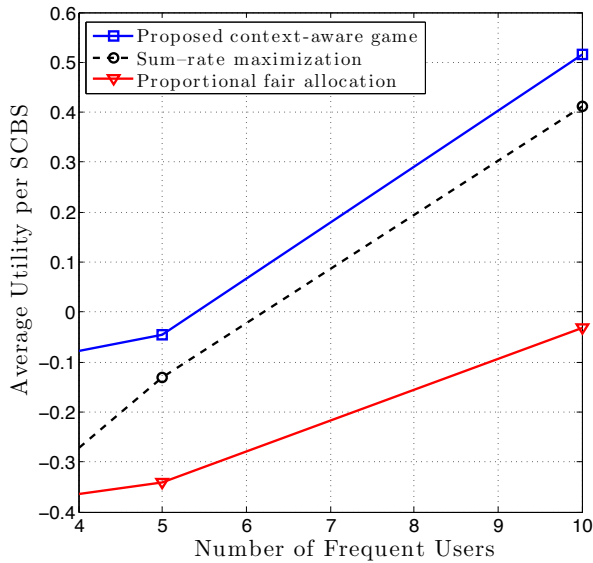}
    \caption{\label{fig::avgUtilNumFreq} \small Average utility per SCBS as the number of frequent users varies, for a network with $5$ SCBSs and $5$ occasional users.}
  \end{center}
\end{figure}

In Fig. \ref{fig::avgUtilTradeOff}, we study the impact of the tradeoff constants $\eta_i$ on the average utility for a network with $6$ frequent users, $5$ SCBSs, and $2$ occasional users. We assume that all SCBSs have the same tradeoff constant, i.e., $\eta_i = \eta$. Fig. \ref{fig::avgUtilTradeOff} shows that our proposed context-aware scheme outperforms sum-rate maximization by a significant margin for all $\eta$, reaching up to $\%56$ for $\eta = 30$.  This gain mainly stems from the fact that sum-rate maximization does not take the misspending cost into consideration and its performance will be inferior for networks that place emphasis on optimizing this cost. Similarly, Fig.~\ref{fig::avgUtilTradeOff} shows that the context-aware scheme outperforms proportionally fair allocation for all values of $\eta$. 
\begin{figure}[!t]
  \begin{center}
    \includegraphics[width=8cm]{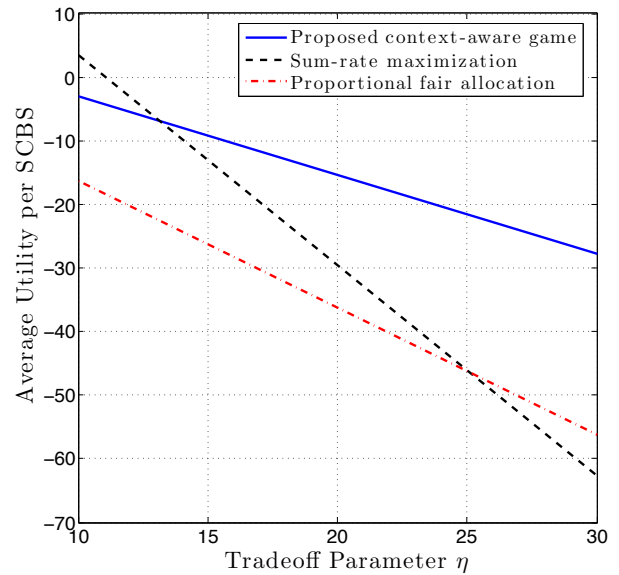}
    \caption{\label{fig::avgUtilTradeOff} \small Average utility per SCBS as the tradeoff parameter $\eta$ varies, for $5$ SCBSs, $6$ frequent users, and $2$ occasional users.}
  \end{center}
\end{figure}

\section{Conclusions} \label{sec:conc}
In this paper, we have introduced a novel, context-aware approach for resource allocation in small cell networks that is built on the idea of categorizing the wireless users into two key types: frequent users, who are regular users of a given small cell, and occasional users, who are only passing users. Then, we have formulated the problem as a noncooperative game via which the SCBSs aim to allocate power based on their knowledge of the traffic patterns of their frequent users. In particular, the SCBSs optimize the tradeoff between maximizing the users' rate and minimizing the cost of misspending resources that is associated with under or over matching the frequent users' demands. We have shown the existence of a unique pure strategy Nash equilibrium and we have studied its properties. Simulation results have shown promising performance gains that context-aware schemes can provide in small cell networks. This paper only scratches the surface of an emerging area of context-aware resource allocation in small cell networks. Motivated by the proximity of the users to their small cells, we expect that future small cell networks are able to exploit a significant amount of information knowledge on their users so as to better service them and optimize the overall resource usage in the network.

\def\baselinestretch{0.8}
\bibliographystyle{IEEEtran}
\bibliography{refs}

\appendix

\subsection{Proof of Theorem~\ref{thm::uniquePSNE}}
We start by proving the case where $K=N$ and each SCBS allocates only one subcarrier to a given user. We will proceed by invoking the existence and uniqueness results of \cite{Rosen1965}. Under the linearity assumption on $f_i$, it follows that the utility of the $i$-th player, $J_i$, is concave in $U_i$ for each fixed value of $U_{-i}$. Hence, it follows by Theorem 1 in \cite{Rosen1965} that a PSNE exists.

Before we prove the uniqueness of PSNE, we introduce some notation to streamline our analysis. Given a certain subcarrier allocation at SCBS $i$, by construction of $\mA_i$, it follows that $U_i$ will have a total of $N_i$ nonzero entries. We collect these entries into a vector $u_i \in \mathbb{R}^{N_i}$, and we denote its entries by $u_i = [u_{i1},\hdots,u_{iN_i}]^T$, where $u_{ij} = u_{i\pi_i(k_j^i)}[k_j^i]$ and $k_j^i \in \mK_i$ is the subcarrier allocated by SCBS $i$ to user $j \in \mN_i$. Once the subcarrier allocation has been fixed, we can obtain $u_i$ from $U_i$ via a bijective mapping. Define the following restricted action set
\[
\hat{\mA}_i =\left\{u_i \in \realnonnegative^{N_i} : u_{ij} = u_{i\pi_i(k_j)}[k_j]; \mathbf{1}^Tu_i  \leq \subscr{P}{max}  \right\},
\]
and let $\hat{\mA} := \hat{\mA}_1 \times \hdots \times \hat{\mA}_M$. Further, let $u = [u_1^T,\hdots,u_M^T]^T$, $u_i \in \hat{\mA}_i$, and define the following matrix
\[
G(u) :=
\left[
\begin{array}{ccc}
\frac{\partial^2}{\partial u_1^2}J_1   & \hdots & \frac{\partial^2}{\partial u_1 \partial u_{M}}J_1  \\
\vdots & \ddots & \vdots \\
\frac{\partial^2}{\partial u_1 \partial u_M}J_M & \hdots & \frac{\partial^2}{\partial u_M^2}J_M
\end{array}
\right].
\]

To prove uniqueness, we will invoke Theorem 6 in \cite{Rosen1965}, which states that the negative definiteness of the matrix $G(u)+G(u)^T$, for all $u \in \hat{\mA}$, is a sufficient condition for the uniqueness of PSNE. Note that the diagonal entries of $\frac{\partial^2}{\partial u_i^2}J_i$ are negative, for all $i$, because $J_i$ is concave in $u_{ij}$, for all $j\in\mN_i$. Hence, if the matrix $G(u)+G(u)^T$ is diagonally dominant for all $u \in \hat{\mA}$, it follows that $G(u)+G(u)^T$ is a negative definite matrix for all $\hat{\mA}$. Formally, for $G(u)+G(u)^T$ to be diagonally dominant, we must have
\begin{eqnarray*}
2\left|\frac{\partial^2}{\partial u_{il}^2}J_i \right| & > & 2  \sum_{\substack{h\in \mN_i\\  h\neq l}} \left| \frac{\partial^2}{\partial u_{il} \partial u_{ih}}J_i \right| \\
& + & \sum_{\substack{j \in \mM \\ j \neq i}} \sum_{h\in \mN_i} \left| \frac{\partial^2}{\partial u_{il} \partial u_{jh}}J_i + \frac{\partial^2}{\partial u_{il} \partial u_{jh}}J_j \right| ,
\end{eqnarray*}
for all $l \in \mN_i$ and all $i \in \mM$. Note that for $h \neq l$, $l,h \in \mN_i$, we have
\[
\frac{\partial^2}{\partial u_{il} \partial u_{ih}}J_i = \frac{\partial}{\partial u_{ih}}\left(\frac{\partial}{\partial u_{il}}(R_{il} -  \eta_i c_{il}) \right) = 0,
\]
because of our assumption that there is no inter-user interference. Similarly, for $h \neq l$, $l,h \in \mN_i$, we have
\begin{eqnarray*}
\frac{\partial^2}{\partial u_{il} \partial u_{jh}}J_i = \frac{\partial}{\partial u_{jh}}\left(\frac{\partial}{\partial u_{il}}(R_{il} -  \eta_i c_{il}) \right) = 0,
\end{eqnarray*}
which again follows because there is no inter-user interference, by assumption. Using these, we can now simplify the diagonal dominance condition to requiring
\begin{equation} \label{eqn::diagdom}
2\left|\frac{\partial^2}{\partial u_{il}^2}J_i \right| > \sum_{\substack{j \in \mM \\ j \neq i}}  \left| \frac{\partial^2}{\partial u_{il} \partial u_{jl}}J_i + \frac{\partial^2}{\partial u_{il} \partial u_{jl}}J_j \right|,
\end{equation}
for all $l \in \mN_i$ and all $i \in \mM$.

Next, we will find bounds on the terms involved in (\ref{eqn::diagdom}). Let $\beta_{ij} := |h_{ij}[k_j^i]|^2d_{ij}^{-\alpha}$ and $I_{ij} = \sum_{l\neq i} h_{lj}[k_j^l]|^2d_{lj}^{-\alpha}u_{lj}[k_j^l]$, where $k_j^l\in \mK_i$ is the subcarrier allocated by SCBS $i$ to user $j\in\mN_i$. Note that
\[
\frac{\partial}{\partial u_{il}} R_{il} =  \beta_{il} \frac{ I_{il} + \sigma_l^2}{I_{il} +\beta_{il}u_{il} + \sigma_l^2}
\]
Using this, and recalling that $c_{ij}$ is linear in $u_{ij}$, by assumption, we can write
\[
\frac{\partial^2}{\partial u_{il}^2} J_i = \frac{\partial^2}{\partial u_{il}^2} R_{il} = - \beta_{il}^2 \frac{ I_{il} + \sigma_l^2}{(I_{il} +\beta_{il}u_{il} + \sigma_l^2)^2}.
\]
Using the definitions in (\ref{eqn::defs1}) and (\ref{eqn::defs2}), we can then write
\begin{equation} \label{eqn::bound1}
\left| \frac{\partial^2}{\partial u_{il}^2} J_i \right| \geq \frac{ \subscr{\beta}{min}\subscr{\sigma^2}{min}}{(M\subscr{\beta}{max}\subscr{P}{max} + \subscr{\sigma^2}{max})^2}.
\end{equation}
Assuming that SCBS $i$ and $j$ interfere at user $l$, i.e., $u_{il}[k_l^i]$, $u_{jl}[k_l^j$  are both positive and $k_l^i=k_l^j$, we can write
\begin{eqnarray*}
\frac{\partial^2}{\partial u_{il} \partial u_{jl}} J_i & = & \frac{\partial^2}{\partial u_{il} \partial u_{jl}} R_{il}\\
& = & \beta_{il}\frac{\beta_{jl} (I_{il} +\beta_{il}u_{il} + \sigma_l^2) - \beta_{jl}(I_{il} + \sigma_l^2)}{(I_{il} +\beta_{il}u_{il} + \sigma_l^2)^2} \\
& = &\frac{\beta_{jl} \beta_{il}^2u_{il} }{(I_{il} +\beta_{il}u_{il} + \sigma_l^2)^2}.
\end{eqnarray*}
Using this and the triangle inequality, we can now obtain the following bound:
\begin{equation} \label{eqn::bound2}
\left| \frac{\partial^2}{\partial u_{il} \partial u_{jl}}J_i + \frac{\partial^2}{\partial u_{il} \partial u_{jl}}J_j\right| \leq 2\frac{\subscr{\beta}{max}^3\subscr{P}{max} }{(\subscr{\sigma^2}{min})^2}.
\end{equation}
Using (\ref{eqn::bound1}) and (\ref{eqn::bound2}), it follows that a sufficient condition for (\ref{eqn::diagdom}) to be satisfied is:
\[
\frac{\subscr{\beta}{min} \subscr{\sigma^2}{min}}{(M\subscr{\beta}{max}\subscr{P}{max} + \subscr{\sigma^2}{max})^2} > (M-1)\frac{\subscr{\beta}{max}^3\subscr{P}{max} }{(\subscr{\sigma^2}{min})^2}
\]
or
\begin{equation} \label{eqn::diagdom2}
\subscr{P}{max} (M\subscr{\beta}{max}\subscr{P}{max} + \subscr{\sigma^2}{max})^2 < \frac{ \subscr{\beta}{min}(\subscr{\sigma^2}{min})^3}{(M-1)\subscr{\beta}{max}^3}
\end{equation}
Then, assuming that $\subscr{\sigma^2}{max} < 1$, inequality (\ref{eqn::diagdom2}) can be satisfied by choosing $\subscr{P}{max} $ as in (\ref{eqn::Pmax}), where $\subscr{K}{max} = 1$ in this case, and we recall that $\subscr{K}{max}$ denotes the maximum number of subcarriers allocated by an SCBS to a user. This completes the proof for the case where $K=N$ and each SCBS allocates only one subcarrier to a given user.

For the case when $K \geq N$, and the SCBSs are allowed to allocate multiple subcarriers to a UE, let us first denote the number of subcarriers allocated by SCBS $i$ to user $j\in\mN_i$ by $K_j^i$. In this case, the dimension of vector $u_i$ would be $\sum_{j \in \mN_i} K_j^i$. The vector $u_i$ can be obtained from $U_i$ using a similar procedure to the one described above. Using the no inter-user interference assumption, we conclude that the only change to the above steps is the inclusion of a summation over the subcarriers allocated by SCBS $j$ to user $l\in\mN_j$ on the right hand side of condition (\ref{eqn::diagdom}). To capture this extra summation, we must change the inequality in (\ref{eqn::diagdom2}) to
\begin{equation*} 
\subscr{P}{max} (M\subscr{\beta}{max}\subscr{P}{max} + \subscr{\sigma^2}{max})^2 < \frac{\subscr{\beta}{min} (\subscr{\sigma^2}{min})^3}{(M-1)\subscr{K}{max}\subscr{\beta}{max}^3}.
\end{equation*}
Similar to the previous case, the above inequality can be satisfied by choosing $\subscr{P}{max} $ as in (\ref{eqn::Pmax}), and the proof is complete.

\end{document}